\newtheorem{lemma}{\noindent Lemma}
\newcommand{\tuple}[1]{{\mbox{$\left\langle#1\right\rangle$}}}
\newcommand{\ignore}[1]{}
\def\unnamed#1#2{
\parbox{2.0in}{
\begin{tabbing}
\hspace{1em}\= #1 \\ \> \parbox{.75in}{\noindent \hrule ~} #2
\end{tabbing}
}}
\def\ant#1{\\ \> $#1$}
\newcommand{\intype}[2]{#1\!:\!#2}
\newcommand{\infc}{\!:\!}
\newcommand{\infcolon}{\!:\!}
\newcommand{\bool}{\mathrm{\bf Bool}}
\newcommand{\sett}{\mathbf{Set}}
\newcommand{\pair}{\mathbf{Pair}}
\newcommand{\id}{\mathbf{id}}
\newcommand{\refl}{\mathbf{refl}}
\newcommand{\group}{\mathbf{Group}}
\newcommand{\class}{\mathbf{Class}}
\newcommand{\double}[1]{\left\llbracket #1 \right\rrbracket}
\def\unnamed#1#2{
\parbox{2.0in}{
\begin{tabbing}
\hspace{1em}\= #1 \\ \> \parbox{.75in}{\noindent \hrule ~} #2
\end{tabbing}
}}
\newcommand{\true}{\mathbf{ True}}
\newcommand{\subst}{\mathbf{Subst}}
\newcommand{\point}{\mathbf{Point}}
\newcommand{\trans}{\mathbf{Trans}}
\newcommand{\predset}{\mathbf{PSet}}
\newcommand{\sigax}{\mathbf{SA}}
\newcommand{\Bi}{\mathbf{Bi}}
\newenvironment{infrule}{
  $\left\{\begin{array}{lrcl}}{
\end{array}\right.$}
\title{Dependent Type Theory as Related to the Bourbaki Notions of Structure and Isomorphism}
\author{David McAllester \\ \\ {\small Toyota Technological Institute at Chicago (TTIC)}}
\date{}
\begin{document}
\maketitle{}

\begin{abstract}
This paper develops a version of dependent type theory in which isomorphism is handled through
a direct generalization of the 1939 definitions of Bourbaki.  More specifically
we generalize the Bourbaki definition of structure from simple type signatures to dependent type signatures.
Both the original Bourbaki notion of isomorphism and its generalization given here
define an isomorphism between two structures $N$ and $N'$ to consist of bijections between their sorts
that transport the structure of $N$ to the structure of $N'$.  Here transport is defined
by commutativity conditions stated with set-theoretic equality.
This differs from the dependent type theoretic treatments of isomorphism given in the groupoid model and
homotopy type theory where no analogously straightforward set-theoretic definition of transport is specified.
The straightforward definition of transport also leads to a straightforward constructive
proof (constructive content) for the validity of the substitution of isomorphics --- something that is difficult
in the groupoid model or homotopy type theory.
\end{abstract}

\section{Introduction}

Isomorphism is a central tool in human mathematical reasoning.
We have a strong intuition that isomorphic objects are ``the same''.  For isomorphic graphs $G$ and $G'$, and a
graph-theoretic property $P$, we have $P(G)$ if and only if $P(G')$.  This is tautological if we define
``graph-theoretic'' to simply mean that this substitution property holds.  But it seems clear that being graph-theoretic is
actually a grammatical well-formedness condition.  A concept such as ``graph'' or
``topological space'' defines an interface to an object much as in object-oriented programming.  An object-oriented
compiler checks that a method or procedure defined on a certain class does not violate the abstract interface to that
class.  This is also possible for general formal mathematics as is demonstrated by the broad acceptance of formal verification systems based on
dependent type theory \cite{MathLib,LEAN}.  This paper explores the relationship between dependent type theory and isomorphism in the context of classical set theory.

The analysis of isomorphism in this paper involves a generalization of the definitions of Bourbaki \cite{Bourbaki39}.
The original definitions are based on simple types which we generalize here to dependent types.
The Bourbaki definition of structure can be paraphrased by saying that
a particular object has one or more carrier sets, a higher order (but simple) signature over those carrier sets specifying
constant, function and predicate symbols, and axioms that must be satisfied by any semantic interpretation of the symbols in the signature.
A group has one carrier set, the group elements, a signature specifying an identity constant, an inverse function, and the group operation,
and the group axioms.
We will call carrier sets sorts by analogy with the sorts of multi-sorted first order logic and first order structures.
A Bourbaki isomorphism between two structures $N$ and $N'$ is a system of bijections between their sorts that transports the structure
of $N$ to the structure of $N'$. Transport is determined by the signature independent of the axioms.
Furthermore the transport condition can always be written as a formula of set theory
without any recursive reference to the notion of isomorphism.

Isomorphism in dependent type theory has been treated in the groupoid model of Marten-L\"{o}ff type theory
\cite{GRPD} and in homotopy type theory \cite{Simpsets,Hott}.
In the groupoid model the sorts of a structure are taken to be groupoids --- categories in which every morphism is an isomorphism.
In Homotopy type theory the sorts are taken to be topological spaces and isomorphism is replaced by homotopy equivalence.
Unlike these previous treatments, here we
make and exploit the observation that every
class expression is equivalent (cryptomorphic) to a
Bourbaki structure definition generalized to dependent types. 
This Bourbaki approach has various advantages.
First, taking sorts to be unstructured sets rather than groupoids or topological spaces simplifies the formal treatment.
Second, the Bourbaki approach provides a direct definition of both isomorphism and transport using simple set-theoretic commutativity conditions
and results in a simple constructive proof (computational content)
for the theorem that isomorphism objects can be substituted in well-formed contexts.  Constructive proofs
for the substitution of isomorphics is much more awkward in the groupoid model or homotopy type theory.
Third, the Bourbaki treatment seems closer to the notion of isomorphism used in colloquial (rigorous but informal) mathematics.
We take the poition that the Bourbaki treatment is adequate for understanding the role of dependent type theory in the foundations of mathematics.
Finally, we also take the position that the Bourbaki treatment is adequate for the construction of automated reasoning systems that exploit
the notion of isomorphism and validity of the substitution of isomorphics.

The Bourbaki treatment presented here differs from Marten-L\"{o}ff type theory (MLTT) in the treatment of propositions.
In the type theory presented here propositions are Boolean valued expression as in classical logic.
In MLTT propositions are types and a proof of a proposition is an instance (or inhabitant) of the type.
The propositions-as-types treatment is motivated by constructivist philosophies of mathematics.
While a set-theoretic interpretation of dependent type theory
is straightforward, it is incompatible with propositions-as-types \cite{Reynolds84,MiquelWerner03}.
To bring type theory closer to classical logic there has been interest in ``proof irrelevant'' treatments of type theory.
Here propositions are still treated as types
but there is only one possible semantic inhabitant --- ``proved'' --- of any proposition type \cite{LeeWerner11,LEAN}.
But even proof-irrelevant type theories involve complexities arising from the residual treatment of propositions as types.
More specifically, it is nontrivial to allow $\mathbf{Prop}$ (the type of propositions) to be a valid sort.
Here we completely abandon propositions-as-types
and simply use formulas denoting Boolean values as in classical logic.  This yields a completely straightforward semantics.

Here we do not tie the semantics to any particular set of inference rules.  The semantics is just a Tarskian definition of meaning
as in semantic treatments of classical logic.  The semantics makes no reference to the $J$ operator of MLTT.
Implementations of formal verification
systems must be sound but are otherwise unconstrained --- they are not required to construct proofs over any given fixed set of inference rules.

In colloquial mathematics we have three notions of equality --- set-theoretic equality, isomorphism
and cryptomorphism.  Cryptomorphism, as discussed by Birkhoff \cite{Birk} and Rota \cite{Rota}, is an equivalence
between concept definitions.  For example a group can either be defined to be a set together with an identity
element, inverse operation and group operation satisfying certain axioms, or a set together with a group operation
such that an identity element and inverse elements exist, in which case they must be unique.
These are structurally different classes but we recognize
that these two definitions yield ``the same'' concept. Here we formulate a notion of functor based directly on type theory
independent of category theory.  A functor from class $\sigma$ to class $\tau$ is simply a term $F[x]\infc\tau$ with a free vaiable $x\infc\sigma$.
Because the language as a whole respects isomorphism, these functors respect isomorphism and can be forgetful. We can
define cryptomorphism in terms of this notion of functor ---  a cryptomorphism is a pair of functors between classes
establishing a bijection.

Colloquial mathematics also involves the notion of canonicality.
Every finite dimensional vector space is isomorphic to its dual
but there is no canonical isomorphism.  At a more elementary level there is no canonical basis for a vector space and no canonical point
on the surface of a sphere.  This corresponds to the fact that in a logic enforcing abstraction barriers and allowing the substitution of
isomorphics it is not possible to name (there is no well-typed expression denoting) an isomorphism between a vector space and its dual,
or a particular point on the surface
of a sphere.  There is no need for category theory in understanding cryptomorphism or canonicality.

Finally, we show that generalized Bourbaki isomorphism provides a model of $J$
and hence an interpretation of propositional equality in a variant of Marten-L\"{o}ff type theory.

\section{The Bourbaki Definitions}
\label{sec:Bourbaki}

We begin by reviewing the definition of structure and isomorphism given by the Bourbaki group of mathematicians.
A Bourbaki structure consists of a
``carrier set'' together with ``structure'' on that set.  For example, a directed graph can be defined to be a
set of nodes (the carrier set) and a edge predicat $E$ (the structure) where $E(n,m)$ is true for
nodes $n$ and $m$ if there is an edge from $n$ to $m$.  To handle the general case we let $\alpha$ be a variable ranging over the
carrier set. The structure provided by a directed graph can be specified by the type expression
$(\alpha \times \alpha) \rightarrow \bool$ which is the type of a
function taking two elements of $\alpha$ as arguments and returning a Boolean value.
In general a Bourbaki structure class has an associated type expression often called a signature.
The signature is a simple type expression over the carrier set variable $\alpha$.
A simple type expression over $\alpha$ is either the carrier variable $\alpha$, the constant $\bool$,
or an expression of the form $\sigma \times \tau$ or $\sigma \rightarrow \tau$
where $\sigma$ and $\tau$ are (recursively) simple
type expressions over $\alpha$.  For example a group has an identity element, an inverse operation and a group operation.
This corresponds to the signature $\alpha \times (\alpha \rightarrow \alpha) \times ((\alpha \times \alpha) \rightarrow \alpha)$.
A topological space is defined by a family of open sets and has signature $(\alpha \rightarrow \bool) \rightarrow \bool$.
For a simple type expression $\tau$ and a particular set $A$,
we will write $\tau[A]$ for the set denoted by the expression $\tau$ when the variable $\alpha$ is interpreted as the set $A$.
A structure with signature $\tau$ is a pair $\tuple{A,x}$ with $A$ a set and $x \in \tau[A]$.

A concept such as group or topological space is specified by giving a signature and axioms.
The axioms must respect the interface defined by the signature.
The isomorphism relation on a class is determined by the signature independent of the axioms.
For two structures $\tuple{A,x}$ and $\tuple{A',x'}$ with signature $\tau$,
Bourbaki defines an isomorphism from $\tuple{A,x}$ to $\tuple{A',x'}$ to be a bijection $f$ from $A$ to $A'$ that ``carries'' or ``transports''
$x$ to $x'$.
For sets $A$ and $A'$, a bijection $f$ from $A$ to $A'$, and a simple type expression $\tau[\alpha]$, we define the transport function $\intype{f_\tau}{\;(\tau[A] \rightarrow \tau[A'])}$ by

$$\begin{array}{cc}
  f_\alpha(x) = f(x) &   f_{\tau \times \sigma}(\tuple{x,y}) = \tuple{f_\tau(x),f_\sigma(y)} \\
  \\
  f_{\bool}(P)= P ~~ & ~~ f_{\tau \rightarrow \sigma}(g)(x) = f_\sigma(g(f_\tau^{-1}(x)))
  \end{array}$$

We then have that two structures $\tuple{A,x}$ and $\tuple{A',x'}$ of signature $\tau$ are isomorphic if there exists
a bijection $f$ from $A$ to $A'$ such that $f_\tau(x) = x'$.

\section{Bourbaki Type Theory}
\label{sec:BTT}

In this section we introduce a dependent type theory built on classical set theory.
The constructs of this type theory are essentially the same as those of Martin L\"{o}ff type theory (MLTT) \cite{mltt1973,mltt1990}
but without propositions-as-types or axiom J.  The language is specified semantically rather than proof theoretically.
The inference rules are implicitly any rules that are sound under the semantics.  This is analogous to the standard practice in mathematics of
defining notation by specifying what the notation means rather than specifying some semantically ambiguous syntactic calculus.

Following standard practice in the set theory, we assume the universe $V$ of sets.  All the expressions
of the formal language defined here denote either elements of $V$
or classes over $V$ (subsets of $V$ that are too large to be sets).\footnote{Section~\ref{sec:macros} describes
  a macro language supporting
  expressions that macro-expand to expressions denoting elements of $V$ or class-sized subsets of $V$.}
This is similar to ZFC set theory where all variables range only over elements of $V$ and proper classes are represented by formulas $\Phi[x]$.

Functions between proper classes (functors), such as the mapping from a topological manifold to its fundamental group,
can be represented by terms $G[x]$ where $x$ is a free variable of $G[x]$.
For example, we might have that for a variable $x$ ranging over topological spaces we have that $G[x]$ denotes a group.
The type system is motivated by the desire for expressions to respect isomorphism.
For example, if for any topological space $x$ we have that $G[x]$ denotes a group, and $X$ and $Y$ denote isomorphic topological spaces,
then $G[X]$ and $G[Y]$ should denote isomorphic groups.

Here we will work with tagged values --- pairs of a tag and contents --- where each tag is one of the tags ``boolean'', ``pair'', ``function'', ``set'' or ``class''.
A predicate $\intype{P}{s \rightarrow \bool}$ is tagged as a function and is different from the subset of $s$ satisfying $P$ which is tagged as a set.
We distinguish classes from sets by their tags --- a set is not a special case of a class.  A class expression will always
be either empty or denote a proper class (a collection too large to be a set).
We let $U_0$ be an alternate notation for $\sett$ and let $U_1$ be an alternate notation for $\class$.

\subsection{The Constructs of the Language}
\label{sec:constructs}

The following clauses give a somewhat informal definition of the semantics of the constructs.
In the following we write $\intype{e}{\sigma}$ to mean that $e$ denotes an element of the set or class $\sigma$.

\begin{itemize}
\item We have $\intype{{\color{red} \sett}}{\class}$, or equivalently $\intype{U_0}{U_1}$, where $\sett$ denotes the class of all sets.

\item We have $\intype{{\color{red} \bool}}{\sett}$ where $\bool$ denotes the set containing the two
  truth values where truth values are tagged as Boolean.

\item For $\sigma:\!U_i$ and $\tau[x]\!:\!U_j$ for all $x \in \sigma$ we have $({\color{red} \Sigma_{\intype{x\;}{\;\sigma}}\;\tau[x]})\!:\!U_{\max(i,j)}$
  where $\Sigma_{\intype{x\;}{\;\sigma}}\;\tau[x]$ denotes the
  set or class of pairs $\tuple{x,y}$ with $x \in \sigma$ and $y \in \tau[x]$.   If $x$ does not occur in $\tau[x]$
  then we write $\Sigma_{\intype{x\;}{\;\sigma}}\;\tau[x]$ as {\color{red} $\sigma \times \tau$}.

\item For $\sigma:\!U_i$ and $\tau[x]\!:\!U_j$ for all $x \in \sigma$
  and for $u\!:\!\sigma$ and $w\!:\!\tau[u]$ we have ${\color{red} \tuple{u,w}}\!:\!\Sigma_{\intype{x\;}{\;\sigma}}\;\tau[x]$ where $\tuple{u,w}$ denotes the pair of $u$ and $w$.

\item For $e\!:\!\Sigma_{\intype{x\;}{\;\sigma}}\;\tau[x]$
  the projections ${\color{red} \pi_1(e)}\!:\!\sigma$ and ${\color{red} \pi_2(e)}\!:\!\tau[\pi_1(e)]$ denote the first and second components the pair.

\item For $s\!:\!\sett$ and $u[x]\!:\!\sett$ for all $x \in s$, we have
  $\intype{({\color{red} \Pi_{\intype{x\;}{\;s}}\;u[x]})}{\sett}$
  where $\Pi_{\intype{x\;}{\;s}}\;u[x]$ denotes the set of all
  functions $f$ with domain $s$ and such that
  for all $x \in s$ we have that $f(x)\!:\!u[x]$. If $x$ does
  not occur in $u[x]$ we abbreviate
  $\Pi_{\intype{x\;}{\;s}}\;u[x]$ as {\color{red} $s \rightarrow u$}.

\item For $s\!:\!\sett$, $u[x]\!:\!\sett$ and $e[x]\!:\!\tau[x]$ for all $x \in s$, we have
  $({\color{red} \lambda\;\intype{x}{s}\; e[x]})\!:\!\Pi_{\intype{x\;}{\;s}}\;e[x]$
  where $\lambda\;\intype{x}{s}\; e[x]$ denotes the function mapping an element $x\!:\!s$ to the value $e[x]$.

\item For $f\!:\!\Pi_{\intype{x\;}{\;s}}\;u[x]$ and $e\!:\!s$ we have ${\color{red} f(e)}\!:\!u[e]$
 where $f(e)$  denotes the value of the function $f$ on argument $e$.
 
\item For $\sigma\!:\!U_i$  and $\Phi[x]\!:\!\bool$ for all $x \in \sigma$,
  we have $({\color{red} S_{\intype{x\;}{\;\sigma}}\;\Phi[x]})\!:\!U_i$ where $S_{\intype{x\;}{\;\sigma}}\;\Phi[x]$ denotes the
  subset of $\sigma$ consisting of all $x\!:\!\sigma$ such that $\Phi[x]$ is true.

\item For $\intype{u}{s}$ and $\intype{w}{s}$ with $\intype{s}{\sett}$ we have $\intype{({\color{red} u = w})}{\bool}$
  where $u = w$ is true if $u$ is the same as $w$ (set-theoretic equality).

\item For $\intype{N}{\sigma}$ and $\intype{M}{\sigma}$ with $\intype{\sigma}{\class}$ we have $\intype{({\color{red} N =_\sigma M})}{\bool}$
  where $N =_\sigma M$ is true if $N$ is $\sigma$-isomorphic to $M$ as defined in section~\ref{sec:isomorphism}.

\item For $\sigma\!:\!U_i$ and $\Phi[x]\!:\!\bool$ for all $x \in \sigma$,
  we have $({\color{red} \forall\intype{x}{\sigma}\;\Phi[x]})\!:\!\bool$ where $\forall\intype{x}{\sigma}\;\Phi[x]$ is true if for every element $x$ of $\sigma$
  we have that $\Phi[x]$ is true.  The formula ${\color{red} \exists\intype{x}{\sigma}\;\Phi[x]}$ is defined similarly.

\item The Boolean formulas {\color{red} $\neg \Phi$, $\Phi \vee \Psi$, $\Phi  \Rightarrow \Psi$}
  and ${\color{red} \Phi \Leftrightarrow \Psi}$ have their
  classical Boolean meaning.  For example, if $\intype{\Phi}{\bool}$ and $\intype{\Psi}{\bool}$ then $\intype{(\Phi \Rightarrow \Psi)}{\bool}$
  where $\Phi \Rightarrow \Psi$ is true unless $\Phi$ is true and $\Psi$ is false.

\item If there exists exactly one $\intype{x}{s}$ such that $\Phi[x]$ with $\intype{s}{\sett}$ then
  $\intype{({\color{red} \mathbf{The}\;\intype{x}{s}\;\Phi[x]})}{s}$ denotes that $x$.
\end{itemize}

As an example the class of groups can be defined as
\begin{eqnarray*}
  \mathbf{GroupSig} & \equiv & \Sigma_{\intype{\alpha\;}{\;\sett}}\;\alpha \times (\alpha \rightarrow \alpha) \times ((\alpha \times \alpha) \rightarrow \alpha) \\
  \\
  \mathbf{Group} & \equiv & S_{\intype{G\;}{\;\mathbf{GroupSig}}}\;\Phi(G)
\end{eqnarray*}
where $\Phi(G)$ states the group axioms.

Well-formedness is relative to a context declaring the types for variables and stating assumptions.
A context $\Gamma$ consists of variable declarations $\intype{x}{\tau}$ and assumptions $\Phi$. Contexts themselves are subject to
well-formedness constraints. We write $\Gamma \models \intype{e}{\sigma}$ to mean that $\Gamma$ is well-formed, that both $e$ and $\sigma$ are well-formed under $\Gamma$,
and that for all variable interpretations satisfying the context
$\Gamma$ we have that the value of $e$ is a member of the value of $\sigma$. The empty context is well-formed and we have $\models \intype{\sett}{\class}$.
For $\Gamma$ a well-formed context and $\Gamma \models \intype{\sigma}{U_i}$ we have that $\Gamma;\;\intype{x}{\sigma}$ is well-formed for any variable $x$
not previously declared in $\Gamma$.
For $\Gamma \models \intype{u}{s}$, $\Gamma \models \intype{w}{s}$ with $\Gamma \models \intype{s}{\sett}$, we have $\Gamma \models \intype{u = w}{\bool}$ where
$u = w$ is true if $u$ equals $v$ (set-theoretic equality) under all variable interpretations satisfying $\Gamma$.
For $\Gamma \models \intype{\Phi}{\bool}$ we write $\Gamma \models \Phi$
to mean that $\Phi$ is true under all variable interpretations satisfying $\Gamma$.
Finally, for $\Gamma \models \intype{\Phi}{\bool}$ we have that $\Gamma;\Phi$ is a well-formed context.

A variable $\alpha$ declared by $\intype{\alpha}{\sett}$ can be viewed as a ``sort'' in the sense of a multi-sorted first order signature.
Sorts are fundamental to isomorphism.  The type of an element of a sort is completely undetermined and we can think of the elements
of a sort as structureless ``points''.  Points are discussed more formally in section \ref{sec:representation}.  The type expression
$$\Sigma_{\intype{\alpha_1\;}{\;\sett}}\;\Sigma_{\intype{\alpha_2\;}{\;\sett}}\;\alpha_1\rightarrow \alpha_2$$
defines a signature with two sorts $\alpha_1$ and $\alpha_2$.
It seems natural that isomorphism between multi-sorted structures is a system of
bijections between the corresponding sorts.

A sequent $\Gamma \models \intype{\Phi}{\bool}$ states that $\Phi$ will have the same truth value under $\Gamma$-isomorphic variable interpretations
where we view $\Gamma$ as defining a class and view a variable interpretation satisfying $\Gamma$ as a structure in that class.
Care must be taken to restrict equality formation to equalities that respect isomorphism.
The well-formedness condition on equality can be characterized by the following ``inference rule'' where the antecedents imply the conclusion.

\vspace{-3ex}
~ \hspace{8em} \unnamed{
  \ant{\Gamma \models \intype{s}{\sett}}
  \ant{\Gamma \models \intype{x}{s}}
  \ant{\Gamma \models \intype{y}{s}}
}{
  \ant{\Gamma \models \intype{(x=y)}{\bool}}
}

\noindent We have
$$\intype{\alpha}{\sett};\;\intype{x}{\alpha};\;\intype{y}{\alpha} \models \intype{(x=y)}{\bool}.$$
However, we cannot form set-theoretic equalities between elements of different sorts.
$$\intype{\alpha_1}{\sett};\;\intype{\alpha_2}{\sett};\;\intype{x}{\alpha_1};\;\intype{y}{\alpha_2} \not \models \intype{(x=y)}{\bool}.$$
But we have
$$\intype{\alpha}{\sett};\;\intype{f}{\alpha\rightarrow \alpha};\;\intype{g}{\alpha\rightarrow \alpha} \models \intype{(f=g)}{\bool}.$$
and
$$\intype{\alpha_1}{\sett};\;\intype{\alpha_2}{\sett};\;\intype{f}{\alpha_1\rightarrow \alpha_2};\;\intype{g}{\alpha_1\rightarrow \alpha_2} \models \intype{(f=g)}{\bool}.$$

\noindent Set-theoretic equality between elements of classes is not well-formed and we have
\begin{eqnarray*}
\intype{\alpha_1}{\sett};\;\intype{\alpha_2}{\sett} & \not \models & \intype{(\alpha_1=\alpha_2)}{\bool} \\
\intype{G_1}{\mathbf{Group}};\;\intype{G_2}{\mathbf{Group}} & \not \models&  \intype{(G_1=G_2)}{\bool}
\end{eqnarray*}
\noindent But isomorphism between elements of classes is well formed and we have
\begin{eqnarray*}
  \intype{\alpha_1}{\sett};\;\intype{\alpha_2}{\sett} & \models & \intype{(\alpha_1=_{\sett} \alpha_2)}{\bool} \\
  \intype{G_1}{\mathbf{Group}};\;\intype{G_2}{\mathbf{Group}} & \models&  \intype{(G_1=_{\mathbf{Group}}G_2)}{\bool}
\end{eqnarray*}

\noindent We can write the following inference rule for the well-formedness of isomorphism equations.

\vspace{-3ex}
~ \hspace{8em} \unnamed{
  \ant{\Gamma \models \intype{\sigma}{\class}}
  \ant{\Gamma \models \intype{N}{\sigma}}
  \ant{\Gamma \models \intype{M}{\sigma}}
}{
  \ant{\Gamma \models \intype{(N=_\sigma M)}{\bool}}
}

\subsection{Minimum Well-Formedness}

Tarskian semantic value functions are typically defined by recursive descent into expressions.  This is a well-founded
recursion that defines a unique meaning.  The above clauses defining well-formedness and meaning are recursive but the recursion is not a simple descent into expressions.
The clauses assert that certain expressions are well-formed.  But they do not explicitly assert what is not well-formed.
We define the well-formed expressions to be only those that are required to be well-formed by the clauses ---
the minimum set of well-formed expressions satisfying the clauses.
This least fixed point semantics for well-formedness supports proofs
by ``induction on the formation of expressions'' which is used in section~\ref{sec:isomorphism}.

\subsection{Functors, Macros and Cryptomorphism}
\label{sec:macros}

Following the terminology of Birkoff \cite{Birk} and Rota \cite{Rota},
two classes $\sigma$ and $\tau$ that are well-formed under $\Gamma$
will be called cryptomorphic (in context $\Gamma$) if there exist
well-formed functor expressions $F[x]$ and $G[y]$
such that
$\Gamma;\;\intype{x}{\sigma} \models G[F[x]] = x$ and $\Gamma;\;\intype{y}{\tau} \models F[G[y]] = y$.
Here the equations are taken to be set-theoretic rather than expressing isomorphism.  The case where
the equalities are taken to be isomorphism is also interesting but will not be discussed here.

It will be convenient to write a functor $\intype{F}{\sigma \rightarrow \tau}$ as a lambda expression
$\lambda \intype{x}{\sigma}\;G[x]$.  However, this is viewed here as syntactic sugar.
For $F = \lambda \intype{x}{\sigma}\;G$ we have that $F[X]$ simply abbreviates
the result of substituting $X$ for $x$ in $G$.  Here functors are viewed as macros.
We will write applications of macros using square brackets such as $F[X]$ to emphasize that
this application represents a syntactic substitution rather than an application of a semantic function.

We can also allow higher order macros that can take macros as arguments and can return macros as values.
We define a macro type to be a set expression, a class expression, or a macro type of the form $\Pi_{x:M_1} \rightarrow M_2[x]$
where $M_1$ and $M_2$ are macro types and where we have the following inference rules for defining the well-formedness of macro expressions.

\vspace{-3ex}
~\hfill \unnamed{
  \ant{\Gamma;\;\intype{x}{M_1} \;\models\; \intype{G}{M_2[x]}}
}{
  \ant{\Gamma\;\;\models\;\intype{(\lambda\;\intype{x}{M_1}\;G)}{\Pi_{x:M_1} \; M_2[x]}}
}
\hfill
\unnamed{
  \ant{\Gamma \models \intype{F}{\Pi_{x:M_1} \; M_2[x]}}
  \ant{\Gamma \models \intype{G}{M_1}}
}{
  \ant{\Gamma \models \intype{F[G]}{M_2[G]}}
}
\hfill ~

Macros are lambda expressions under a term model semantics. In the presence of macros we have that
for $\intype{N}{\sigma}$, where $N$ and $\sigma$ can contain macro applications but not free macro variables,
the expression $N$ strongly normalizes under $\beta$-reduction to a well-formed expression $N'$ in the base language (the language without macros)
and $\sigma$ similarly normalizes to a well-formed base language type $\sigma'$ with $N' \infcolon \sigma'$.  We will show that base expressions respect isomorphism.

For a macro variable $P$ declared by, say, $\intype{P}{(\mathbf{Group} \rightarrow \bool)}$ it is
important that $P$ is ranging over well-formed lambda expressions and not arbitrary functions.
There exist set-theoretic predicates that distinguish isomorphic groups and hence fail to respect
isomorphism.  But if $P[G]$ $\beta$-reduces (via substitution) to a well-formed Boolean expression involving $G$ then
$G =_{\mathbf{Group}} \;G'$ implies $P[G] \Leftrightarrow P[G']$.

\subsection{Signature-Axiom Classes}
\label{sec:sigax}

We define a signature-axiom (SA) class expression to be a class expression of the form
$$\Sigma_{\intype{\alpha\;}{\;\sett^n}}\;S_{\intype{x\;}{\;s[\alpha]}}\;\Phi[\alpha,x]$$
where $\sett^n$ abbreviates $\sett \times \cdots \times \sett$ with $n$ occurances of $\sett$ and
where $s[\alpha]$ is a set expression.
We call $s[\alpha]$ the signature and $\Phi[\alpha,s]$ the axioms. Here $\alpha$ gives the list of sorts --- typically a single sort
but we allow for multi-sorted signatures.
The notion of isomorphism for a signature-axiom class depends on the signature but not on the axioms.

\noindent For example the class of groups can be written as
$$\Sigma_{\intype{\alpha\;}{\;\sett}}\;S_{\intype{x\;}{\;\alpha \times (\alpha \rightarrow \alpha) \times ((\alpha \times \alpha) \rightarrow \alpha)}} \;\Phi[\alpha,x]$$
where $\Phi[\alpha,s]$ states the group axioms.
The class of topological spaces can be written as
$$\Sigma_{\intype{\alpha\;}{\;\sett}}\;S_{\intype{\mathrm{Open}\;}{\;(\alpha \rightarrow \bool) \rightarrow \bool}}\;\Phi[\alpha,\mathrm{Open}]$$
where $\Phi[\alpha,\mathrm{Open}]$ states that axioms of point-set topology.

\medskip
\begin{lemma}
  Every class expression is cryptomorphic to a signature-axiom class.
\end{lemma}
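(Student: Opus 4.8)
The plan is to induct on the formation of the class expression $\sigma$, using the least–fixed–point (``minimum well-formedness'') reading of the formation clauses, which licenses induction on the formation of expressions. First I would pin down exactly which expressions can have type $\class$. Every construct other than $\sett$, $\Sigma$ and $S$ produces a set, a truth value, or an element of some type: $\Pi$, $\lambda$, application, $S$ over a set, equality, isomorphism, the Boolean connectives and quantifiers, and $\mathbf{The}$ all land in $\sett$ or in $\bool$; and no \emph{variable} can be declared of type $\class$, since that would require $\intype{\class}{U_2}$ and there is no such universe. Projections are never class expressions either, as $\intype{\pi_1(e)}{\sigma}$ would need $\sigma$ to be the expression $\class$, forcing the ill-formed $\intype{\class}{U_i}$. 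Hence a class expression is either $\sett$, a $\Sigma_{\intype{x}{\rho}}\tau[x]$ in which $\rho$ or $\tau$ is a class, or an $S_{\intype{x}{\rho}}\Phi[x]$ with $\rho$ a class. I would prove the slightly stronger context–relative claim: for every context $\Gamma$ and class expression $\sigma$ with $\Gamma \models \intype{\sigma}{\class}$, $\sigma$ is cryptomorphic under $\Gamma$ to a signature–axiom class.

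For the base case $\sigma = \sett$, the SA class $\Sigma_{\intype{\alpha}{\sett}}\;S_{\intype{x}{\bool}}\;(x = \true)$ has a singleton fibre $\{\true\}$ over each $\alpha$, and the functors $\lambda\intype{\alpha}{\sett}\;\tuple{\alpha,\true}$ and $\lambda\intype{p}{\cdot}\;\pi_1(p)$ witness the cryptomorphism (the round trips hold set-theoretically because $\pi_2(p) = \true$ on every element of the class). For $\sigma = S_{\intype{x}{\rho}}\Phi[x]$ with $\rho$ a class, I apply the induction hypothesis to $\rho$, obtaining an SA class $\Sigma_{\intype{\alpha}{\sett^n}}\;S_{\intype{y}{s[\alpha]}}\;\Psi[\alpha,y]$ together with functors $F,G$; transporting $\Phi$ along $G$ and merging the two nested separations yields the SA class $\Sigma_{\intype{\alpha}{\sett^n}}\;S_{\intype{y}{s[\alpha]}}\;(\Psi[\alpha,y]\wedge\Phi[G[\tuple{\alpha,y}]])$, and $F,G$ restrict to the witnessing functors.

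The main case is $\sigma = \Sigma_{\intype{x}{\rho}}\tau[x]$. When $\rho$ is a class I apply the hypothesis to $\rho$ under $\Gamma$; when $\tau$ is a class I apply it to $\tau[x]$ under the extended context $\Gamma;\intype{x}{\rho}$, so that the resulting SA class $\Sigma_{\intype{\beta}{\sett^n}}\;S_{\intype{z}{t[\beta,x]}}\;\Xi[\beta,z,x]$ and its functors carry $x$ as a free variable. In each of the three subcases ($\rho$ a class and $\tau$ a set, $\rho$ a set and $\tau$ a class, or both classes) the manipulations are the same: (i) replace $\rho$ and/or $\tau$ by their SA forms via the hypothesis functors; (ii) reassociate, and when $\rho$ is a set commute the set-indexed $\Sigma$ past the sort quantifier $\Sigma_{\intype{\beta}{\sett^n}}$, using definable bijections built only from $\pi_1$, $\pi_2$ and $\tuple{\cdot,\cdot}$; (iii) concatenate the two sort blocks $\sett^m$ and $\sett^n$ into $\sett^{m+n}$; and (iv) fold the leftover set-level $\Sigma$'s and separations over the sorts into a single signature set (a $\Sigma$ of set expressions) carved by one conjoined axiom. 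Since every step is a cryptomorphism exhibited by explicit pairing/projection functors and cryptomorphism composes, $\sigma$ is cryptomorphic to the final SA class.

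I expect the main obstacle to be bookkeeping rather than conceptual. The genuinely delicate point is making the induction hypothesis \emph{uniform} in the fibre variable $x$: the fibre's cryptomorphism functors contain $x$ free, and one must check that they assemble into a single cryptomorphism of the whole $\Sigma$ --- this is precisely where the context-relative statement and the formation order supplied by minimum well-formedness are needed. The secondary difficulty is the universe-level accounting, namely verifying that once all $\sett$ sorts are pulled to the front what remains is really a \emph{set} expression, so that it is a legitimate signature; this rests on the facts that $\Sigma$ and $S$ over sets yield sets and that $\Pi$ never produces a class. By contrast, confirming the round-trip set-theoretic identities for the reassociation and commutation functors is routine.
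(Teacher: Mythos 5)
Your proposal is correct and follows essentially the same route as the paper: structural induction on the class expression using the minimum-well-formedness case analysis ($\sett$, a dependent pair class with a class component, or a subclass of a class), the same base case pairing a set with a trivial Boolean axiom, the same transport of the separation predicate along the inverse functor $G$ in the subclass case, and the same three $\Sigma$ subcases handled by moving sort blocks to the front, folding the residual set-level structure into a single signature, and conjoining axioms. The delicate point you flag --- keeping the induction hypothesis uniform in the fibre variable $x$ --- is exactly what the paper handles by letting $\overline{\gamma[x]}$ carry $x$ free in its signature $s[x,\alpha]$ and axiom $\Phi[x,\alpha,w]$.
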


\begin{proof}
For a given class expression $\sigma$ we will define a corresponding structure-axiom class $\overline{\sigma}$ and functors $\sigax_\sigma$
and $\sigax^{-1}_\sigma$ satisfying the following inference rules.

\noindent
\begin{infrule}
  \Gamma & & \models & \intype{N}{\sigma} \\
  \hline \\
  \Gamma & & \models & \intype{\sigax_\sigma[N]}{\overline{\sigma}} \\
  \Gamma & & \models & \sigax^{-1}_\sigma[\sigax_\sigma[N]] = N
\end{infrule}

\noindent \begin{infrule}
  \Gamma & & \models & \intype{S}{\overline{\sigma}} \\
  \hline \\
  \Gamma & & \models & \intype{\sigax^{-1}_\sigma[S]}{\sigma} \\
  \Gamma & & \models & \sigax_\sigma[\sigax^{-1}_\sigma[S]] = S
\end{infrule}

A class expression is either the constant $\sett$ or a class of pairs where each such pair contains a set within it. The functor $\sigax_\sigma(N)$ moves all sets to the front.
This is just a rearrangement of the pairing structure.  Note that reversing the two components of a pair is a
cryptomorphism between $\sigma \times \tau$ and $\tau \times \sigma$.  But note that for the class of pointed sets $\Sigma_{\intype{\alpha\;}{\;\sett}}\;\alpha$
it is important that the set (sort) $\alpha$ comes before the point of that sort --- if the point comes first it is not possible to give it a type.
There is no ``reversal cryptomorphism'' for pointed sets. But the sets (sorts) can always be moved to the top of the class.
Note that the elements of a signature-axiom class $\Sigma_{\intype{\alpha\;}{\;\sett^n}}\;S_{\intype{x\;}{\;s[\alpha]}}\;\Phi[\alpha,x]$ are pairs of the form $\tuple{A,x}$
where $x$ is a set element.  Set elements are either points (elements of a sort variable as discussed in section~\ref{sec:representation}), pairs of set elements, or functions between set elements.
This implies that set elements are never themselves sets. So the only sets (sorts) in an instance $\tuple{A,x}$ of a signature-axiom class are the sets (sorts) in $A$.

A class expression must be either the constant $\sett$, a dependent pair class $\Sigma_{\intype{x\;}{\;\sigma}}\;\tau[x]$, or a subclass
$S_{\intype{x\;}{\;\sigma}}\;\Phi(x)$ and we can define $\overline{\sigma}$, $\sigax_\sigma$ and $\sigax^{-1}_\sigma$ by structural induction on $\sigma$.
For the class $\sett$ we have
\begin{eqnarray*}
  \overline{\sett} & :\equiv & \Sigma_{\intype{\alpha\;}{\;\sett}}\;S_{\intype{P\;}{\;\bool}}\;P \\
  \sigax_{\sett}[s] & :\equiv & \pair(s,\true) \\
  \sigax^{-1}_{\sett}[p] & :\equiv & \pi_1(p)
\end{eqnarray*}
For $\sigma = S_{\intype{x\;}{\;\tau}}\;\Phi[x]$ we let
$\Sigma_{\intype{\alpha\;}{\;\sett^n}}\;S_{\intype{x\;}{\;s[\alpha]}}\;\Psi[\alpha,x]$
be $\overline{\tau}$
and define
\begin{eqnarray*}
  \overline{\sigma} & :\equiv & \Sigma_{\intype{\alpha\;}{\;\sett^n}}\;S_{\intype{x\;}{\;s[\alpha]}}\; \Psi[\alpha,x] \wedge \Phi[\sigax^{-1}_\tau[\pair(\alpha,x)]] \\
  \sigax_\sigma[N] & :\equiv & \sigax_\tau[N] \\
  \sigax^{-1}_\sigma[S] & :\equiv & \sigax^{-1}_\tau[S]
\end{eqnarray*}
For $\sigma = \Sigma_{\intype{x\;}{\;\tau}}\;\gamma[x]$ we have that one of $\tau$ and $\gamma[x]$ must be a class expression. If $\gamma[x]$ is a set expression
then $\tau$ must be a class expression in which case we let
$\Sigma_{\intype{\alpha\;}{\;\sett^n}}\;S_{\intype{x\;}{\;s[\alpha]}}\;\Phi[\alpha,x]$ be $\overline{\tau}$ and define
\begin{eqnarray*}
  \overline{\sigma} & :\equiv & \Sigma_{\intype{\alpha\;}{\;\sett^n}}\;S_{\intype{z\;}{\;\Sigma_{\intype{x\;}{\;s[\alpha]}}\;\gamma[\sigax^{-1}_\tau[\pair(\alpha,x)]]}}\; \Phi[\alpha,\pi_1(z)] \\
  \sigax_\sigma[N] & :\equiv & \pair(\pi_1(W),\pair(\pi_2(W),\pi_2(N)));\;\;\;W = \sigax_\tau[\pi_1(N)] \\
  \sigax_\sigma^{-1}[S] & :\equiv & \pair(\sigax^{-1}_\tau[\pair(\pi_1(s),\pi_1(\pi_2(S)))],\pi_2(\pi_2(S)))
\end{eqnarray*}
For $\sigma = \Sigma_{\intype{x\;}{\;\tau}}\;\gamma[x]$ with $\tau$ a set expression we must have that $\gamma[x]$ is a class expression in which case
we let
$\Sigma_{\intype{\alpha\;}{\;\sett^n}}\;S_{\intype{w\;}{\;s[x,\alpha]}}\;\Phi[x,\alpha,w]$ be $\overline{\gamma[x]}$ and define
\begin{eqnarray*}
  \overline{\sigma} & :\equiv & \Sigma_{\intype{\alpha\;}{\;\sett^n}}\;S_{\intype{z\;}{\;\Sigma_{\intype{x\;}{\;\tau}}\;s[x,\alpha]}}\; \Phi[\pi_1(z),\alpha,\pi_2(z)] \\
  \sigax_\sigma[N] & :\equiv & \pair(\pi_1(W),\pair(\pi_1(N),\pi_2(W)));\;\;\;W = \sigax_{\gamma[x]}[\pi_2(N)] \\
  \sigax_\sigma^{-1}[S] & :\equiv & \pair(\pi_1(\pi_2(S)),\sigax^{-1}_{\gamma[x]}[\pair(\pi_1(s),\pi_2(\pi_2(S)))])
\end{eqnarray*}

For $\sigma = \Sigma_{\intype{x\;}{\;\tau}}\;\gamma[x]$ with both $\tau$ and $\gamma[x]$ class expressions we let
\newline $\Sigma_{\intype{\alpha\;}{\;\sett^n}}\;S_{\intype{w\;}{\;s_1[\alpha]}}\;\Phi[\alpha,w] \;\;\mbox{and} \;\;\Sigma_{\intype{\beta\;}{\;\sett^m}}\;S_{\intype{w\;}{\;s_2[x,\beta]}}\;\Psi[x,\beta,w]$
be $\overline{\tau}$ and $\overline{\gamma[x]}$ respectively and define
\begin{eqnarray*}
  \overline{\sigma} & :\equiv &\left\{\begin{array}{l} \Sigma_{\intype{\alpha;\beta\;}{\;\sett^{n+m}}} \\
  ~~~~~ S_{\intype{z\;}{\;\Sigma_{\intype{w\;}{\;s_1[\alpha]}}\;s_2[\sigax^{-1}_\tau[\pair(\alpha,w)],\beta]}} \\
  ~~~~~~~~~~\Phi[\alpha,\pi_1(z)] \wedge \Psi[\sigax^{-1}_{\tau[x]}[\pair(\alpha,\pi_1(z))],\beta,\pi_2(z)] \end{array}\right. \\
  \\
  \sigax_\sigma[N] & :\equiv & \left\{\begin{array}{l}\pair(\pi_1(W_1);\pi_1(W_2),\;\pair(\pi_2(W_1),\pi_2(W_2))) \\
  ~~~~W_2 = \sigax_{\tau}[\pi_1(N)] \\
  ~~~~W_2 = \sigax_{\gamma[x]}[\pi_2(N)] \end{array}\right. \\
  \\
  \sigax_\sigma^{-1}[S] & :\equiv & \left\{\begin{array}{l} \pair(\sigax^{-1}_\tau[\pair(\alpha,\pi(\pi_2(S)))],\;\sigax^{-1}_{\gamma[x]}[\pair(\beta,\pi_2(\pi_2(S)))]) \\
    ~~~ \alpha;\beta = \pi_1(S) \end{array}\right.
\end{eqnarray*}

Where in the definition of $\overline{\sigma}$ in the last case we take $\intype{\alpha;\beta}{\sett^{n+m}}$ to abbreviate $\intype{\eta}{\sett^{n+m}}$ and where $\alpha$ is taken to be the first
$n$ sets in $\eta$ and $\beta$ is taken to be to the remaining $m$ sets in $\eta$.  A similar convention applies to the notation $\alpha;\beta = \pi_1(S)$ in the last line of the last case.
\end{proof}

\subsection{Signature Simplification}

It is worth noting the following equations which can be used to simplify signatures.

$$\begin{array}{rclr}
  \Sigma_{\intype{x\;}{\;S_{\intype{z\;}{\;s}} \Phi[z]}}\;S_{\intype{y\;}{\;u}}\;\Psi[x,y] & = & S_{\intype{p\;}{\;s \times u}}\;\Phi[\pi_1(p)]\wedge\Psi[\pi_1(p),\pi_2(p)] & \mbox{for $x \not \in u$} \\
  \\
  \Pi_{\intype{x\;}{\;s}}\;S_{\intype{y\;}{\;u}}\;\Phi[x,y] & = & S_{\intype{f\;}{\;s \rightarrow u}}\;\forall\intype{x}{s}\; \Phi[x,f(x)] & \mbox{for $x \not \in u$}
\end{array}$$

\noindent These equations can be used as rewrite rules. In practice these rules can usually be used to rewrite
a signature-axiom class to one whose signature is a simple type
--- either a set variable, an expressions not containing set variables, a product set $s_1 \times
s_2$ or a function set $s_1\rightarrow s_2$ where $s_1$ and $s_2$ are recursively simple.  However,
this is not always the case.  The definition of a category given in section~\ref{sec:categories}
involves a compatibility requirement on the composition of morphisms.

\subsection{Signature Ambiguity and Signatures with Free Sort Variables}
\label{sec:ambiguous}

Of course the same structure can be contained in multiple classes.  For example, an Abelian group is also a group.  It turns out that the same object can even be assigned
different signatures.  For example we have
\begin{eqnarray*}
  \intype{\alpha}{\sett};\;\intype{x}{\alpha} & \models & \intype{\tuple{\alpha,x}}{(\Sigma_{\intype{\beta\;}{\;\sett}}\;\beta)} \\
    & \models & \intype{\tuple{\alpha,x}}{(\sett \times \alpha).}
\end{eqnarray*}
Different signatures impose different abstract interfaces.  For example we have
\begin{eqnarray*}
  \intype{\alpha}{\sett};\;\intype{x}{\alpha};\; \intype{n}{(\sett \times \alpha)};\;\intype{y}{\alpha} & \models & \intype{(\pi_2(n) = y)}{\bool} \\
  \intype{\alpha}{\sett};\;\intype{x}{\alpha};\; \intype{n}{(\Sigma_{\intype{\beta\;}{\;\sett}}\;\beta)};\;\intype{y}{\alpha} & \not \models & \intype{(\pi_2(n) = y)}{\bool}.
\end{eqnarray*}

We also consider signatures such as
$\Sigma_{\intype{\beta\;}{\;\sett}}\;\beta\rightarrow \alpha$ where $\alpha$ is a sort variable.
Now consider the following valid sequent.
$$\left\{\begin{array}{l}\intype{\alpha}{\sett}; \\\intype{x}{\alpha}; \\\intype{f}{\Sigma_{\intype{\beta\;}{\;\sett}}\;\beta \rightarrow \alpha}; \\\;\intype{i}{\pi_1(f)}\end{array}\right\}
\models \intype{(\;\pi_2(f)(i) = x\;)}{\bool}$$
Here the sort $\alpha$ is ``exposed'' for ``external'' use in the context in which $f$ is defined.
The notion of isomorphism  for the class $\Sigma_{\intype{\beta\;}{\;\sett}}\;\beta\rightarrow \alpha$ only allows bijections on the ``hidden''
sort $\beta$. The isomorphism classes of objects in the class
$\Sigma_{\intype{\beta\;}{\;\sett}}\;\beta\rightarrow \alpha$ correspond to bags
or multisets of values of sort $\alpha$.

\subsection{Points and Representations}
\label{sec:representation}

Cayley's theorem states that every group can be represented by a group of permutations.  This theorem fundamentally involves isomorphism.  More formally it states that
every group is group-isomorphic to a group whose elements are permutations on an underlying set and where the group operation is composition of permutations.
Representation is fundamental to mathematics and we will use Cayley's theorem as an example.

In clarifying representation it will be useful to define $\Gamma \models \intype{e}{\point}$ to mean that the no type can be assigned to $e$ ---
either $\Gamma$ is inconsistent (there are no variable interpretations satisfying $\Gamma$)
or for any value $v$ there exists a variable interpretation $\rho$ satisfying $\Gamma$ with ${\cal V}_\Gamma\double{e}\rho = v$.
For example we have
$$\intype{\alpha}{\sett};\;\intype{x}{\alpha} \models \intype{x}{\point}.$$
If $\Gamma \models \intype{e}{\point}$ then $\Gamma;\Delta \models \intype{e}{\point}$
for any well-formed context extension $\Delta$
because a well-formed context extension must not violate the abstraction barrier on $e$ imposed by $\Gamma$ --- $e$ must be treated as a point in
all well-formed expressions involving $e$.

The elements of a group {\em variable} are points in the sense that $$\intype{G}{\group};\;\intype{x}{\pi_1(G)} \models \intype{x}{\point}.$$
However the elements of a group {\em representation} have structure.
Abbreviating technical details we first define a composition-closed function predicate (CCFPred) by
$$\mathbf{CCFPred} :\equiv \Sigma_{\intype{\alpha\;}{\;\sett}} \;\;\;S_{\intype{P\;}{\;(\alpha \rightarrow \alpha) \rightarrow \bool}}\;\;\;\mathbf{CompClosed}(P)$$
where $\mathbf{CompClosed}(P)$ states that the the predicate $P$, viewed as a set of functions, is closed under composition.
We then have
$$\intype{P}{\mathbf{CCFPred}} \models \intype{\tuple{\predset(\pi_2(P)),\mathbf{Comp}(\pi_2(P))}}{\mathbf{Magma}}$$
Where for $\intype{Q}{\;s \rightarrow \bool}$ we have that $\predset(Q)$ abbreviates $S_{\intype{x\;}{\;s}}\;Q(x)$, and
where for a composition-closed predicate $\intype{Q}{(\alpha \rightarrow \alpha) \rightarrow \bool}$ we have that $\mathbf{Comp}(Q)$ denotes the composition
operation, and where the class $\mathbf{Magma}$ is a signature-axiom class without axioms.
$$\mathbf{Magma} :\equiv \Sigma_{\intype{\alpha\;}{\;\sett}}\;(\alpha \times \alpha) \rightarrow \alpha$$
A group can be defined as a magma satisfying group axioms and we can then write Cayley's theorem as
$$\begin{array}{l}
  \forall\;\intype{G}{\group} \\
  ~~~\exists\;\intype{P}{\mathbf{CCFPred}} \\
  ~~~~~~~~~~G =_{\group} \;\tuple{\predset(\pi_2(P)), \;\mathbf{Comp}(\pi_2(P))}
\end{array}$$
While the elements of a group variable are points, the elements of a permutation group are functions.  However,
a permutation group is still a possible semantic value for a permutation variable.  Being a point represents a lack
of information.

\ignore{
We can also define $\Gamma \models \intype{e}{\mathbf{Pair}}$ to mean that in all variable interpretations $\rho$ satisfying $\Gamma$
we have ${\cal V}_\Gamma\double{e}\rho$ is a value tagged as a pair.
We define $\Gamma \models \intype{e}{\mathbf{Function}}$ to mean that in all variable interpretations $\rho$ satisfying $\Gamma$
we have ${\cal V}_\Gamma\double{e}\rho$ is a value tagged as a function.  The following lemma will be useful in section~\ref{sec:subst}.
}

\section{Isomorphism}
\label{sec:isomorphism}

The sine-qua-non of isomorphism-motivated type theory, is the
inference rule of the substitution of isomorphics.

\begin{infrule}
  \Gamma & & \models & \intype{\tau}{\class} \\
  \Gamma & \intype{x}{\sigma} & \models & \intype{g[x]}{\tau} \\
  \Gamma & & \models & N =_\sigma N' \\
  \hline \\
  \Gamma & &  \models & g[N] =_\tau g[N']
\end{infrule}

Here we will define Bourbaki isomorphism in the context of the dependent type theory defined in section~\ref{sec:constructs}
and prove the soundness of the substitution of isomorphics.  As in colloquial mathematics, a Bourbaki isomorphism is a bijection
between the corresponding sorts of two structures satisfying certain commutativity conditions.  An isomorphism between graphs $G$ and $G'$ is a bijection between the nodes
of $G$ and the nodes of $G'$ that identifies the edges of $G$ with the edges of $G'$.  The isomorphism is the bijection.  For a multi-sorted signature, as in a colored graph,
we will take an isomorphism to be a system of bijections between corresponding sorts.
When treating propositional equality as isomorphism in MLTT we write the set of $\sigma$-isomorphisms from $N$ to $N'$ as $\id(\sigma,N,N')$.
Under the Bourbaki semantics presented here $\id(\sigma,N,N')$ is taken to be meta-notation for a set expression denoting a set of bijections satisfying conditions.
We will define $\id(\sigma,N,N')$ for the case where $\sigma$ is a signature-axiom class.  For a general class $\sigma$ we will define the isomorphism
set $\id(\sigma,N,N')$ through the transformation to a signature-axiom class.
\begin{equation}
  \label{eq:geniso}
  \id(\sigma,N,N') :\equiv \id(\overline{\sigma},\sigax_\sigma(N),\sigax_\sigma[N'])
\end{equation}

The original definition of structure and isomorphism due to Bourbaki, and described in section~\ref{sec:Bourbaki}, involves the notion of transport.
For simple signatures (as opposed to dependent type signatures) the definitions of section~\ref{sec:Bourbaki} straightforwardly generalize to multi-sorted classes.
For a signature-axiom class $\Sigma_{\intype{\alpha\;}{\;\sett}}\;S_{\intype{x\;}{\;s[\alpha]}}\;\Phi[\alpha,x]$ and for $\intype{A,A'}{\sett^n}$ we have that
two structures
$$\intype{\tuple{A,x},\;\tuple{A',x'}}{\;\;\Sigma_{\intype{\alpha\;}{\;\sett^n}}\;S_{\intype{x\;}{\;s[\alpha]}}\;\Phi[\alpha,x]}$$
are isomorphic if there exists an $n$-tuple $f$ of bijections with $\intype{f_i}{A_i \rightarrow A'_i}$ such that $f$
carries $x$ to $x'$.   In section~\ref{sec:Bourbaki} transport is written as $f_s(x) = x'$.
When treating propositional equality as isomorphism in MLTT
the notation $f_s$ of section~\ref{sec:Bourbaki} is written as
$\subst(\sett^n,A,A',f,s)$ 
where $s$ is a macro such that $s[\alpha]$ denotes a set.
We then have
\begin{eqnarray}
  & & \id(\;(\Sigma_{\intype{\alpha\;}{\;\sett^n}}\;S_{\intype{x\;}{\;s[\alpha]}}\;\Phi[\alpha,x]),\;\tuple{A,X},\;\tuple{A',X'}) \nonumber \\
  \label{eq:id}
  & :\equiv & S_{\intype{f\;}{\;\id(\sett^n,A,A')}}\;\;\subst(\sett^n,A,A',f,s)(X) = X'  
\end{eqnarray}
where $\subst(\sett^n,A,A',f,s)$ is defined below in a way that allows it to be represented by a lambda expression.
We can take $N =_\sigma N'$ to be a formula stating that there exists an element of $\id(\sigma,N,N')$.
In BTT no linguistic extensions are required to handle isomorphism --- there is no need for a J operator.

\subsection{Defining {\bf Subst} and {\bf Transport}}

As in section~\ref{sec:Bourbaki}, we will define $\subst(\sett^n,A,A',f,s)$ by case analysis on $s$.

\begin{lemma}[Set Case Analysis]
  \label{lem:cases}
  If $\Gamma \models \intype{s}{\sett}$ then either
  \begin{itemize}
  \item $\Gamma;\;\intype{x}{s} \models \intype{x}{\bool}$,
  \item $\Gamma;\; \intype{x}{s} \models \intype{x}{\point}$ with $\Gamma;\;\intype{x}{s} \models \intype{x}{\pi_{i_1}(\cdots\pi_{i_n}(m)\cdot)}$ where $\intype{m}{\sigma} \in \Gamma$,
  \item $\Gamma;\;\intype{x}{s} \models \intype{x}{\Sigma_{\intype{x\;}{\;u}} w[x]}$ for $\Sigma_{\intype{x\;}{\;u}} w[x]$ no larger than $s$,
  \item or $\Gamma;\;\intype{x}{s} \models \intype{x}{\Pi_{\intype{x\;}{\;u}} w[x]}$ for  $\Pi_{\intype{x\;}{\;u}} w[x]$ no larger than $s$.
  \end{itemize}
  where in the second clause we allow $\intype{x}{m}$ for the case where $n=0$.
\end{lemma}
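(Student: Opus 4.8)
The plan is to prove the lemma by induction on the formation of the set expression $s$, exploiting the least-fixed-point (``minimum'') well-formedness semantics, which is precisely what licenses proofs by induction on the formation of expressions. Since $\Gamma \models \intype{s}{\sett}$, the expression $s$ (taken in the base language, after macro normalization) must have been produced by one of the clauses of Section~\ref{sec:constructs} that can denote an element of $\sett$. First I would show that the only such clauses are: $s = \bool$; $s$ a declared sort variable $\alpha$ with $\intype{\alpha}{\sett}$ in $\Gamma$; $s = \Sigma_{\intype{x\;}{\;u}}\;w[x]$; $s = \Pi_{\intype{x\;}{\;u}}\;w[x]$; $s = S_{\intype{x\;}{\;\sigma}}\;\Phi[x]$ with $\intype{\sigma}{\sett}$; or $s$ a projection $\pi_{i_1}(\cdots\pi_{i_n}(m)\cdots)$ of type $\sett$ rooted at a context variable $m$. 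That this list is exhaustive follows from the tagging discipline: a pairing, a $\lambda$, a $\mathbf{The}$, an equality, or a logical connective denotes a set element or a Boolean, not a set, and set elements are never themselves sets; moreover $\Pi$ and $\Sigma$ land in $\sett$ only when their components do, so $s \rightarrow \sett$ is not a set and a genuine application $f(e)$ can never have type $\sett$.

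The structured constructors dispatch immediately. If $s = \bool$ then $\intype{x}{s}$ gives $\intype{x}{\bool}$, the first clause. If $s = \Sigma_{\intype{x\;}{\;u}}\;w[x]$ or $s = \Pi_{\intype{x\;}{\;u}}\;w[x]$ then $\intype{x}{s}$ gives the third or fourth clause respectively, with the witnessing type taken to be $s$ itself, which is trivially no larger than $s$ (and whose components $u$ and $w[x]$ are strict subterms). For the two sort cases I would appeal to the analysis of Section~\ref{sec:representation}: when $s$ is a bare sort variable $\alpha$, the abstraction barrier imposed by $\intype{\alpha}{\sett}$ forces $\Gamma;\;\intype{x}{s} \models \intype{x}{\point}$, and we take the witnessing path with $n = 0$ and $\intype{x}{m}$ for $m = \alpha$; when $s$ is a projection path $\pi_{i_1}(\cdots\pi_{i_n}(m)\cdots)$ of type $\sett$, the component it selects is again a sort, so the same abstraction barrier gives $\intype{x}{\point}$ together with the required witnessing equation. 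Both subcases yield the second clause.

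The step needing the most care is the subset case $s = S_{\intype{x\;}{\;\sigma}}\;\Phi[x]$ together with making the projection case airtight. For the subset case the key observation is that $\intype{x}{s}$ holds exactly when $\intype{x}{\sigma}$ and $\Phi[x]$, so the Boolean/point/$\Sigma$/$\Pi$ classification of $x$ is inherited verbatim from $\sigma$; since $\sigma$ is a strictly smaller set expression, the induction hypothesis supplies one of the four clauses for $\sigma$, and the same clause holds for $s$ with the witnessing $\Sigma$- or $\Pi$-type no larger than $\sigma$, hence no larger than $s$. To make the projection case rigorous I would combine the structural description of class expressions from the cryptomorphism Lemma of Section~\ref{sec:sigax} with the fact that set elements are never sets: once a class is put in signature-axiom form, its only $\sett$-typed components are the sort slots of the $\sett^n$ prefix, because the signature element is a set element and projecting into it yields points, pairs, or functions but never sets. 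Hence every projection of type $\sett$ selects a genuine sort, whose elements are points, closing the second clause and completing the induction.
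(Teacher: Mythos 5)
Your overall skeleton --- induction on the formation of $s$ under the minimal well-formedness semantics, with the tagging discipline ruling out set-element positions such as applications --- matches the paper's proof, and your treatment of $\bool$, the $\Sigma$ and $\Pi$ constructors, the bare sort variable, and the subset case $S_{\intype{x\;}{\;\sigma}}\;\Phi[x]$ (inherit the classification from $\sigma$ by the induction hypothesis) agrees with it. But there is a genuine gap in the projection case. You assert that the only projection-formed set expressions are paths $\pi_{i_1}(\cdots\pi_{i_n}(m)\cdot)$ rooted at a context variable, claiming exhaustiveness ``follows from the tagging discipline,'' and later that ``every projection of type $\sett$ selects a genuine sort, whose elements are points.'' Neither holds as stated: a projection may be applied to a literal pair expression of class type. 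For $\intype{x}{\bool\times\bool}$ one has $\intype{\tuple{\bool\times\bool,\,x}}{\Sigma_{\intype{\beta\;}{\;\sett}}\;\beta}$, and then $\pi_1(\tuple{\bool\times\bool,\,x})$ is a well-formed $\sett$-typed projection, not rooted at a variable, equal to $\bool\times\bool$ --- a set whose elements are pairs, not points. So the projection case cannot uniformly land in the second (point) clause; it can fall under any of the four clauses.

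The paper closes exactly this hole with an argument your proposal is missing. First it shows that for $\Gamma \models \intype{p}{\sigma}$ with $\intype{\sigma}{\class}$, the members of class expressions are sets or pairs containing sets, while set-element pairs never contain sets; hence class pairs and set pairs are disjoint, $p$ cannot be an application, and $p$ must be a pair expression or a projection expression. It then argues by induction on the size of $p$ that $\pi_i(p)$ is semantically equal either to a previously constructed expression --- to which the outer induction hypothesis applies, yielding whichever of the four clauses that expression satisfies --- or to a strictly shorter projection, terminating only when the path is rooted at a class-typed context variable, which is the genuine point case. Your detour through the signature-axiom normal form of the cryptomorphism lemma does not substitute for this: $\sigax_\sigma$ rearranges the pairing structure of the term, so facts about the $\sett^n$ prefix of $\overline{\sigma}$ do not deliver the required sequent $\Gamma;\;\intype{x}{s} \models \intype{x}{\pi_{i_1}(\cdots\pi_{i_n}(m)\cdot)}$ with $\intype{m}{\sigma} \in \Gamma$ in the \emph{original} context. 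That explicit rooted path is not decorative: the definition of $\trans$ dispatches on it to decide which component bijection $\pi_{i_1}(\cdots\pi_{i_n}(f)\cdot)$ to apply, so the reduction-to-a-rooted-path argument must be carried out, not asserted.
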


\begin{proof}
The proof is by induction on the construction of $s$.
We will first list some ways set expressions {\em cannot} be formed.
Note that by the induction hypothesis for a previously constructed set $s$
we have that if $\Gamma \models \intype{e}{s}\!:\!\sett$ then $e$ is either a Boolean, a point, a pair or a function.
This implies that for a previously constructed set $s$ with $\Gamma \models \intype{e}{s}\!:\!\sett$ we have $\Gamma \not \models \intype{e}{\sett}$.
Hence a newly constructed set expression cannot itself be a set element.  This implies that a newly constructed set expression cannot be a function
application or a projection $\pi_i(p)$ where we have $\intype{p}{s}\!:\!\sett$.
a variable declared in a context either denotes a set element or a class element. Set elements cannot denote sets.

We now consider the ways in which set expressions {\em can} be formed.
For a subset expression $S_{\intype{x\;}{\;\tau}}\;\gamma[x]$ the lemma follows from the
induction hypothesis on $\tau$.
If $\Gamma \models \intype{\alpha}{\sett}$ with $\alpha\infc \sigma \in \Gamma$ then $\sigma$ cannot be a pair class and by induction on $\sigma$
one can show that $\Gamma;\;\intype{x}{\sigma} \models x\infc \point$.  So this case is covered by the first clause above.
Set-level pair types and function types correspond to the
last two cases of the lemma.

The final case is where $\Gamma \models \intype{\pi_i(p)}{\sett}$ for some projection expression $\pi_i(p)$
with $\Gamma \models \intype{p}{\sigma}\!:\!\class$.  Here $\sigma$ must be a class of pairs.
A class expression is either the constant $\sett$, a pair class
expression $\Sigma_{\intype{x\;}{\;\tau}}\;\gamma[x]$ where either $\tau$ or $\gamma[x]$ are class expressions, or a subclass expression $S_{\intype{x\;}{\;\tau}}\;\Phi[x]$.
By induction on class expressions, the members of a class expression are either sets, or are pairs that contains sets within them.
By the above comments, if $p$ is a previously constructed pair expression with $\Gamma \models \intype{p}{s}\!:\!\sett$ then $p$
cannot contain a set within it.
This implies that pairs in classes are disjoint from pairs in sets.  Hence for $\Gamma \models \intype{p}{\sigma}\infc\class$ we have that
$p$ cannot be an application expression and hence $p$ must a pair expression or a projection expression.
We now show by induction on the size of $p$ that for $\Gamma\models \intype{\pi_i(p)}{\sett}$ we have $\Gamma \models \pi_i(p) = \pi_{i_1}(\cdots\pi_{i_n}(n)\cdot)$
for $n\infc \sigma \in \Gamma$ with $\Gamma \models \sigma\infc \class$.
This is immediate if $p$ does not contain any pair expressions.
If $\pi_i(p)$ does contain a pair expression then it is either equal to a previously constructed expression, in which case the theorem follows from the induction hypothesis,
or is equal to a shorter projection expression in which case the lemma follows from the induction on the size of the projection expressions.
\end{proof}

When $s[\alpha]$ is a simple type, as in section~\ref{sec:Bourbaki}, the definition of $\subst$ is a straightforward
multi-sorted generalization of the definition in section~\ref{sec:Bourbaki}.
However, here we must handle the case where $s[\alpha]$ is a dependent type. If $s[\alpha]$ is a dependent pair type then the type of the second component
of a pair depends on the value of the first component.  To define the transport operation on the second component we need to know the value of the first component.
To handle this we introduce a transport function $\trans(\sett^n,A,A,f,u,X,s)$.  This transport operation can transport the second component of a pair from a dependent pair type where
the $X$ contains the value of the first component.
More specifically, for

\begin{infrule}
  \Gamma & & \models & \intype{A,A'}{\sett^n} \\
  \Gamma & & \models & \intype{f}{\id(\sett^n,A,A')} \\
  \Gamma; & \intype{\alpha}{\sett^n} & \models & \intype{u[\alpha]}{\sett} \\
  \Gamma & & \models & \intype{X}{u[A]} \\
  \Gamma; & \intype{\alpha}{\sett^n};\;\intype{x}{u[\alpha]} & \models & \intype{s[\alpha,x]}\sett
\end{infrule}

we will define $\trans(\sett^n,A,A',f,u,X,s)$ where
\begin{eqnarray*}
  & & \subst(\sett^n,A,A',f,s) \\
  & = & \trans(\sett^n,A,A',f,(\lambda\;\intype{\alpha}{\sett^n} \;\bool),\true,(\lambda\;\intype{\alpha}{\sett^n}\;\lambda \intype{P}{\bool}\;s[\alpha]))
\end{eqnarray*}
In the following definition we omit the first three arguments to $\subst$ and $\trans$ to shorten the notation.

\begin{itemize}
  \item If $\alpha$ does not occur in $s[\alpha,x]$ then
    $$\trans(f,u,X,s)(y) = y.$$
  \item If $\Gamma;\; \intype{\alpha}{\sett^n};\;\intype{x}{u[\alpha]};\;\intype{y}{s[\alpha,x]} \models \intype{y}{\pi_{i_1}(\cdots\pi_{i_n}(\alpha)\cdot)}$ then
        $$\trans(f,u,X,s)(y) = \pi_{i_1}(\cdots\pi_{i_n}(f)\cdot)(y).$$
  \item If $\Gamma;\; \intype{\alpha}{\sett^n};\;\intype{x}{u[\alpha]};\;\intype{y}{s[\alpha,x]} \models \intype{y}{\Sigma_{\intype{z\;}{\;v[\alpha,x]}}\;w[\alpha,x,z]}$
    \begin{eqnarray*}
      \trans(f,u,X,s)(y) &= &\tuple{\begin{array}{l} \trans(f,u,X,v)(\pi_1(y)), \\ \trans(f,\Sigma_{\intype{x\;}{\;s[\alpha]}}\;v[\alpha,x],\tuple{X,\pi_1(y)},\tilde{w})(\pi_2(y))\end{array}} \\
      \tilde{w}[\alpha,\tuple{x,z}] & = & w[\alpha,x,z]
      \end{eqnarray*}
  \item If $\Gamma;\; \intype{\alpha}{\sett^n};\;\intype{x}{u[\alpha]};\;\intype{y}{s[\alpha,x]} \models \intype{y}{\Pi_{\intype{z\;}{\;v[\alpha,x]}}\;w[\alpha,x,z]}$ then
    \begin{eqnarray*}
      \trans(f,u,X,s)(y) & = & \lambda\;\intype{z'}{v[A',X']}\;\;\trans(f,\Sigma_{\intype{x\;}{\;s[\alpha]}}\;v[\alpha,x],\tuple{X,z},\tilde{w})(y(z)) \\
      X' & = & \subst(f,u)(X)\\
      z & = & \trans^{-1}(f,u,X,v)(z') \\
      \tilde{w}[\alpha,\tuple{x,z}] & = & w[\alpha,x,z]
\end{eqnarray*}

\end{itemize}

This recursive definition of $\trans(f,u,X,s)$ is well founded --- each recursive call reduces the size of $s$ while maintaining the sum of the sizes of $u$ and $s$
or reduces the sum of the sizes of $u$ and $s$ by eliminating $s$ in the call to $\subst(f,u)$ in the last clause.
The definition provides an explicit representation of $\trans(f,u,X,s)$ as a lambda expression.
It is possible to show by induction on the definition that $\trans(f,u,X,s)$ is a bijection from $s[A,X]$ to $s[A',X']$ where $X' = \subst(f,u)(X)$.
Or written another way we have
\begin{equation}
  \label{eq:transport}
  \Gamma \models \intype{\trans(\sett^n,A,A',f,u,X,s)}{\Bi[s[A,X],s[A',\subst(f,u)(X)]]}
\end{equation}
In particular 
$$\Gamma \models \intype{\subst(\sett^n,A,A'f,s)}{\Bi[s[A],s[A']]}.$$
The definition of $\subst(\sett^n,A,A',f,s)$ now completes the definition of the isomorphism set $\id(\sigma,N,N')$ as given in equations (\ref{eq:geniso}) and
~(\ref{eq:id}).

\subsection{The Soundness of Substitution}

We finally prove the soundness of substitution.
We assume $\Gamma;\;\intype{n}{\sigma} \models \intype{g[n]}{\tau}$
and
$\Gamma \models \intype{f_\sigma}{\id(\sigma,N,N')}$ and construct $f_\tau$ with $\Gamma \models \intype{f_\tau}{\id(\tau,g[N],g[N'])}$.
It suffices to consider the case where $\sigma$ and $\tau$ are both signature-axiom classes.
\begin{eqnarray}
  \label{sigmadef}
  \sigma & = & \Sigma_{\intype{\alpha\;}{\;\sett^n}}\;S_{\intype{x\;}{\;s[\alpha]}}\;\Phi[\alpha,x] \\
  \label{taudef}
  \tau & = & \Sigma_{\intype{\beta\;}{\;\sett^m}}\;S_{\intype{y\;}{\;u[\beta]}}\;\Psi[\beta,y]
\end{eqnarray}
We can write $N$ and $N'$ as $\tuple{A,X}$ and $\tuple{A',X'}$
and write $g[N]$ and $g[N']$ as $\tuple{B[\tuple{A,X}],Y[\tuple{A,X}]}$ and $\tuple{B[\tuple{A',X'}],Y[\tuple{A',X'}]}$.
The sequent $\Gamma;\;\intype{n}{\sigma} \models \intype{g[n]}{\tau}$
implies
$$\Gamma;\;\intype{\alpha}{\sett^n}\;\intype{x}{(S_{\intype{x\;}{\;s[\alpha]}}\;\Phi[\alpha,x])} \models \intype{B[\tuple{\alpha,x}]}{\sett^m}$$
By the definition of $\id(\sigma,\tuple{A,X},\tuple{A',X'})$ we have that $f_\sigma$ is a tuple of bijections
with $\intype{\pi^i(f)}{\Bi[\pi^i(A),\pi^i(A')]}$ where $\pi^i$ selects the $i$th element of a sequence.
Defining $\eta$ to be the signature-axiom set in (\ref{sigmadef})
$$\eta = \lambda\; \intype{\alpha}{\sett^n}\;S_{\intype{x\;}{\;s[\alpha]}}\;\Phi[\alpha,x]$$
we have
$$\Gamma \models \intype{\trans(f,\eta,X,(\lambda\;\intype{\alpha}{\sett^n}\;\lambda\;\intype{x}{\eta[\alpha]}\;\pi^i(B[\alpha,x])))}{\Bi[\pi^i(B[A,X]),\;\pi^i(B[A',X'])]}$$
\noindent We now take $f_\tau$ to be the tuple of bijections defined by
$$\pi^i(f_\tau) = \trans(f_\sigma,\eta,X,(\lambda\;\intype{\alpha}{\sett^n}\;\lambda\;\intype{x}{\eta[\alpha]}\;\pi^i(B[\alpha,x]))).$$
We must now show
$$\Gamma \models \intype{f_\tau}{\id(\tau,g(N),g(N'))}$$
which is equivalent to showing
$$\subst(\sett^m,B[A,X],B[A',X'],f_\tau,u)(Y[A,X]) = Y[A',X']$$
But this is implied by
\begin{eqnarray*}
  & & \subst(\sett^m,B[A,X],B[A',X'],f_\tau,s_\tau) \\
  & = &\trans(\sett^n,A,A',f_\sigma,u,X,(\lambda\;\intype{\alpha}{\sett}\;\lambda\;\intype{x}{u[\alpha]}\;s_\tau[B[\alpha,x]]))
\end{eqnarray*}
which can be proved by induction on the size of $s_\tau$.
  
\subsection{BTT as a Model of MLTT: The Semantics of J}
\label{sec:meets}

So far we have not considered axiom J of Marten-L\"{o}ff type theory (MLTT).  In fact we do not see any role for $J$
in the formulation of Bourbaki type theory.  However, it seems worth noting that
the operation $J$ can be defined in the Bourbaki model in such a way that the $J$ axiom holds.
Under this semantic definition of $J$ the Bourbaki model is a model of a version MLTT.  The inference rule for $J$ can be written as

\noindent J
\begin{infrule}
    \Gamma; & \intype{n,n'}{\;\sigma};\;\intype{g}{\id(\sigma,n,n')} &\models& \intype{\tau[n,n',g]}{U_i} \\
    \Gamma & & \models& \intype{f}{\id(\sigma,N,N')} \\
    \Gamma; & \intype {n}{\sigma}  & \models &\intype{\delta[n]}{\tau[N,N,\mathbf{Refl}(\sigma,N)]} \\
    \hline \\
    \Gamma & & \models & \intype{J(\sigma,N,N',f,\tau,\delta)}{\;\tau[N,N',f]}
\end{infrule}

\noindent We will give a semantics for an operator $J'$ which satisfies the stronger rule

\noindent J'
\begin{infrule}
    \Gamma; & \intype{n,n'}{\;\sigma};\;\intype{g}{\id(\sigma,n,n')} &\models& \intype{\tau[n,n',g]}{U_i} \\
    \Gamma & & \models& \intype{f}{\id(\sigma,N,N')} \\
    \hline \\
    \Gamma & & \models & \intype{J'(\sigma,N,N',f,\tau)}{\;\;\tau[N,N,\refl(\sigma,N)] \rightarrow \tau[N,N',f]}
\end{infrule}

\noindent We can then define $J$ by
$$J(\sigma,N,N',f,\tau,\delta) = J'(\sigma,N,N',f,\tau)(\delta(N))$$

First we reduce defining $J'$ to the case where $\sigma$ is a signature-axiom class using the following reduction
where we note that conversion to a signature-axiom class does not change the isomorphism $f$ --- isomorphisms are
defined anyway by first converting to a signature-axiom class.
\begin{eqnarray*}
  J'(\sigma,N,N',f,\tau) & = & J'(\overline{\sigma},\sigax_\sigma(N),\sigax_\sigma(N'),f,\tilde{\tau}) \\
  \tilde{\tau}[n,n',g] & = & \tau[\sigax_\sigma^{-1}(n),\sigax_\sigma^{-1}(n'),g]
\end{eqnarray*}
Now we consider the case where $\sigma$ is a signature-axiom class.
$$\sigma = \Sigma_{\alpha:\sett^n}\;S_{x:u[\alpha]}\;\Phi[\alpha,x]$$
Under this assumption we can define the semantics of $J'$
\begin{eqnarray*}
J'(\sigma,\tuple{A,X},\tuple{A',X'},f,\tau) & :\equiv &\trans(\sett^n,A,A',f,\eta,\tuple{X,\mathbf{Refl}(\sigma,N)},\gamma) \\
\eta[\alpha] & :\equiv & \Sigma_{x:(S_{x:u[\alpha]}\;\Phi[\alpha,x])} \; \id(\sigma, \tuple{A,X},\tuple{\alpha,x}) \\
\gamma[\alpha,\tuple{x,g}] &:\equiv & \tau[\tuple{A,X},\tuple{\alpha,x},g]
\end{eqnarray*}
For this to be well defined for the general case of $J'$ we need to generalize $\trans(\sett^n,A,A',f,u,X,s)$
to handle the case where $u[\alpha]$ and $s[\alpha,x]$ could be classes.  This is done by generalizing lemma~\ref{lem:cases} to
handle $\Gamma \models \intype{s}{U_i}$ rather than just $\Gamma \models \intype{s}{\sett}$ and where the proof is extended to handle the case of $\Gamma \models \intype{s}{\class}$.
The definition of $\trans(\sett^n,A,A',f,u,X,s)$ then remains unchanged involving the same cases on $s$.
We omit further details.

\noindent Equation (\ref{eq:transport}) also holds under the more general version of $\trans$ and we have
$$\intype{J'(\sigma,\tuple{A,X},\tuple{A',X'},f,\tau)(\tuple{A,X})}{\;\;\tau[\tuple{A,X},\tuple{A',X'},\subst(f,\eta)(\refl(\sigma,N))]}$$
$$\eta[\tuple{\alpha,x}]  =  \id(\sigma,\tuple{A,X},\tuple{\alpha,x})$$
For this $\eta$ we have
$$\subst(f,\eta)(\refl(\sigma,N)) = f$$

\noindent Finally we note that $\trans$, and hence $\subst$, can be expressed in terms of $J'$ as
\begin{eqnarray*}
  \trans(\sett^n,A,A',f,u,X,s) & = & J'((\Sigma_{\alpha:\sett^n}\;u[\alpha]),\tuple{A,X},\tuple{A',X'},f,\tau) \\
  \tau[\tuple{\alpha,x},\tuple{\beta,y},g] & = & s[\beta,y]
\end{eqnarray*}
We leave it to the reader to verify that $J'$ can be expressed in terms of $J$.

\section{Limitations}

The dependent type theory developed here has some limitations that are best exhibited by considering categories and sheaves.
It is easy to represent the class of small categories as a signature-axiom class in the version of Bourbaki type theory presented here.
However, the objects in the category of topological spaces --- the topologies --- form a proper class.
The Bourbaki type theory defined here does not support signature-axiom classes whose sorts are proper classes.
Sheaves raise a related issue.  The definition of a sheaf fails to be a signature-axiom class because the data of a sheaf over a topological space
$X$ involves a map $\intype{f}{\mathbf{OpenSet}(X) \rightarrow \sett}$.  In the system defined here $f$ must be a macro --- we have required that
semantic functions must be from sets to sets and a semantic function from a set to the proper class of all sets is not allowed.  So the class of sheaves
over a topological space $X$ is not a signature-axiom class.

Both of these issues can be handled in a version of the groupoid model with universes \cite{GRPD}.
However, moving to the groupoid model looses the advantages of Bourbaki type theory.
We loose the intuitive representation of isomorphism as bijections between sorts
and the straightforward approach to a constructive proof (providing computational content) for the validity of the substitution of isomorphics.
We take it to be an open problem to expand Bourbaki type theory to the full power of the groupoid model while
preserving these advantages in some form.

\section{Conclusions}

Isomorphism is central to both human mathematical thought and to the structure of mathematics.
Bourbaki type theory is intended to provide a formal treatment of isomorphism in correspondence with
human thought and the structure of mathematics.  Presumably people recognize isomorphism as equality
because the language of mathematics has a grammar supporting the validity of the substitution of
isomorphics.

It is not expected that studying dependent type theory will improve the ability of
mathematicians to do mathematics.  Rather, it seems clear that the understanding of types (concepts)
and isomorphism is already subconsciously ingrained into human thought.  This is analogous to the
grammar of natural languages, such as English, where native speakers speak grammatically and can
recognize ungrammatical sentences even when they have no ability to enumerate rules of grammar.
Speaking and understanding language would be impossible if one had to think consciously about all
the rules being used at a subconscious level.  Still, the study of language and grammar seems
interesting as a scientific investigation in it own right.

The study of grammar, and the role of grammar in thought, seems most significant from the
perspective of artificial intelligence.  If indeed effective human mathematical thought rests on grammatical
properties of the language of thought, this should be relevant to the construction of automated
reasoning systems.  One might expect that the grammatical structure of mathematical thought would be
related to thought generally and perhaps even to common sense reasoning.  One should note, however, that
current trends in AI have replaced logic with deep networks as the central paradigm.  Presumably logic and deep networks
can be combined in so-called neuro-symbolic systems.  Dependent type theory seems likely to be relevant to this endeavor.


\end{document}